\newtheorem{lemma}{Lemma}
\title{The Double Covariance Model: A Stochastic Reconstruction of Quantum Entangled States via Interplay of Micro-Macro Time Scales}
\author{Andrei Khrennikov\\
Center for Mathematical Modeling 
in Physics and Cognitive Sciences\\
Linnaeus University, V\"axj\"o, SE-351 95, Sweden\\
*Corresponding author email: Andrei.Khrennikov@lnu.se}
\date{\today}
\begin{document}

\maketitle

\begin{abstract}
This article presents a concrete mathematical framework for the generation of entangled quantum states from classical stochastic processes. We demonstrate that any density operator $\rho_{AB}$ of a composite system can be derived from the correlations between two underlying stochastic processes, $X(t)$ and $Y(t)$, representing the random fluctuations of its subsystems. This construction utilizes a two-scale temporal scheme—micro and macro time—where quantum correlations emerge as macro-correlations derived from underlying micro-correlations. We propose the Double Covariance Model (DCM), which reproduces the fundamental properties of quantum theory by treating the quantum state as the fourth-order moment structure of an underlying classical probability space.
\end{abstract}

{\bf keywords:} quantum vs classical, generation of entangled states, classical stochastic processes, interplay of macro-micro time scales, composite vs. individual quantum systems, relational quantum mechanics

\section{Introduction}

The inquiry into the relationship between classical probability and the quantum formalism began at the inception of quantum theory over a century ago. This paper presents the {\it Double Covariance Model (DCM)}, a framework that provides a stochastic reconstruction of entangled quantum states through the interplay of micro and macro time scales. 

The DCM treats the quantum density operator as a hierarchical statistical construct. It posits that a density operator is effectively a ``covariance of covariances'':
\begin{itemize}
    \item \textbf{Primary (Micro) Covariance}: The temporal synchronization of subquantum processes $X(t)$ and $Y(t)$ defines a random operator at the micro-scale.
    \item \textbf{Secondary (Macro) Covariance}: The ensemble stability of these random operators across a macro-scale temporal window defines the density operator $\rho_{AB}$.
\end{itemize}

From the perspective of multivariate statistics, the DCM interprets the quantum state as the fourth-order moment structure of an underlying classical probability space. This approach demonstrates that any density operator of a composite system can be derived from the fourth-order correlations between two underlying classical stochastic processes.

\subsubsection*{Entanglement as micro-time consistency}
A central thesis of DCM is that entanglement is not based on the statistical dependence of subquantum processes at the macro level. Instead, entanglement is interpreted as a macro-time phenomenon reflecting \textbf{micro-time consistency}. In this framework, subquantum processes can be statistically independent while remaining pathwise constrained (consistent) at the micro-scale. For example, Bell states can be generated by partitioning a macro-time window into subintervals where specific micro-correlations are satisfied.

\subsubsection*{Foundations and context}
The DCM is part of a long lineage of attempts to bridge the classical and quantum regimes, including
Wigner functions - providing a quasi-probability distribution in phase space \cite{Wig32, Hil84}, De Broglie's Double Solution Model - An early attempt at a causal, wave-particle duality \cite{DeB27, DeB60}, von Neumann's No-Go Theorem - the early formal argument against hidden variables \cite{VN}, Bell's Theorem - establishing the boundaries of local realism \cite{Bel64, Asp82, Bel66}, Stochastic Electrodynamics (SED) - attributing quantum effects to a classical zero-point field \cite{AM1,AM2,Col77,AM3}, Hydrodynamic Models - representing the Schr\"{o}dinger equation as fluid dynamics \cite{Mad27, Wya05},
Bohmian Mechanics - a deterministic, non-local pilot-wave theory \cite{Boh52,Dur09},
Kochen-Specker Theorem - Highlighting the role of contextuality \cite{KS67, Mer90}, 
Hydrodynamic Droplet Systems - classical fluid dynamics with pilot-wave behavior \cite{Bush1,Bush2},
Quantum–classical Hybrid Models - coupling quantum and classical variables within a single framework 
\cite{Elze1,Elze2}, Prequantum Classical Statistical Field Theory (PCSFT) - theory of prequantum random fields \cite{PCSFT1,PCSFT2,PCSFT3,Beyond}
 
While PCSFT served as a catalyst for this work, the DCM takes a significant conceptual step forward by grounding entanglement in the dynamic interplay between micro- and macro-level covariances.

\subsubsection{Models in physics based on micro-macro time scale interplay}

In various branches of physics, the transition from fundamental fluctuations to observable macroscopic behavior is often modeled through the interplay of two or more distinct time scales. This section summarizes key models where a micro scale (rapid fluctuations or collisions) and a macro scale (ensemble stability or order parameters) are utilized.

{\bf Langevin dynamics and Brownian motion \cite{LA1,LA2}: }
The classic description of a macroscopic particle suspended in a fluid. The micro-scale consists of rapid, stochastic collisions with fluid molecules ($10^{-12}$ s), while the macro-scale describes the observable diffusion of the particle. The interplay is captured by the Langevin equation, where micro-fluctuations are modeled as a white noise term.

{\bf Haken's synergetics and the slaving principle  \cite{Haken1975,Haken1983}:}
This framework describes self-organizing systems. It relies on the adiabatic elimination of fast-relaxing micro variables. The macro behavior is governed by a few slow-moving order parameters that ``slave'' the micro-components, leading to emergent patterns in lasers and fluids.

{\bf Kinetic theory and the BBGKY hierarchy  \cite{Kirkwood,Grad1949}:}
In statistical mechanics, the transition from reversible micro-dynamics to irreversible macro-thermodynamics requires a hierarchy of scales. The micro-scale is the collision time, while the macro-scale is the relaxation time to equilibrium. The Boltzmann equation emerges by coarse-graining the micro-correlations \cite{Kirkwood,Grad1949}.

{\bf Stochastic electrodynamics  \cite{Col77}, \cite{AM1}-\cite{AM3}:}
SED posits that quantum effects arise from the interaction of classical particles with a classical, stochastic zero-point field (ZPF). The micro-scale involves the high-frequency fluctuations of the ZPF, while the macro-scale involves the averaged motion of particles that mimics quantum mechanics.

{\bf Brownian entanglement \cite{ALA}:} For two interacting classical Brownian particles, the separation of micro and macro time scales generates coarse-grained velocity–position correlations that cannot be factorized, creating a classical analog of entanglement. This micro–macro time interplay produces entanglement-like correlations that vanish under finer temporal resolution.

\subsubsection*{The relational nature of systems}
The DCM further challenges the notion of the isolated system. It suggests that the distinction between ``composite'' and ``individual'' systems is relational rather than ontological. In this view:
\begin{itemize}
    \item A \textbf{composite system} is one where micro-synchronization between internal processes is explicitly resolved.
    \item An \textbf{individual system} is a marginal residue of a larger, synchronized whole, where internal synchronization is treated as a unified, emergent fluctuation.
\end{itemize}

The paper proves that the partial trace operation in quantum mechanics is equivalent to the marginalization of hidden correlations in the underlying classical space. Thus, the ``Quantum State'' of an individual system is not a standalone primitive but a reduced description of its participation in a larger global field.

\subsubsection*{Coupling with Relational Quantum Mechanics}

A distinctive feature of the DCM is its alignment with the conceptual foundations of Relational Quantum Mechanics (RQM) 
\cite{R1}-\cite{R3}. In this framework, the identity of a system is not an absolute ontological primitive but is fundamentally \textit{relational}. By defining the density operator as a marginal residue of a larger synchronized field, the DCM provides a mathematical realization of the RQM thesis: that the state of a system is always relative to the observer or the surrounding environment. Here, the distinction between a ``composite'' and an ``individual'' system is determined by the window of synchronization ($\Delta$), suggesting that what we perceive as an isolated quantum state is actually a localized manifestation of a global covariance structure.

The DCM offers a solution to the ``state of the universe'' problem in RQM. In the DCM, the Universe is the only system with perfect global synchronization; every other state we measure is a partial, relational view necessitated by our local perspective.

\section{Remarks on mathematics}

We aim to present DCM at a rigorous mathematical level while avoiding unnecessary technical overload. Our goal is to make the paper accessible to a broad audience.
In principle, the reader may follow the paper using a heuristic understanding of probability theory, random variables, and stochastic processes.  A distinctive feature of our probabilistic constructions is that all random variables are complex-valued; consequently, covariances are defined using complex conjugation.

To simplify functional-analytic considerations, we assume throughout that all Hilbert spaces are finite-dimensional.

Throughout this paper the symbols  $H_A, H_B$ denote complex Hilbert spaces; $H_A \otimes H_B$ is their tensor product Hilbert space; denote the space of linear operators from  $H_B$ to $H_A$ by the symbol ${\cal L}(H_B,H_A);$ for Hilbert space $H,$  
${\cal L}(H) = {\cal L}(H,H).$

On the space ${\cal L}(H_B, H_A)$ we introduce the Hilbert space structure; for operators $\hat V_1,   \hat V_2,$ their scalar product is defined as 
\begin{equation}
\label{SP}
\langle   \hat V_1|  \hat V_2\rangle= \rm{Tr}   \hat V_1^\star   \hat V_2
\end{equation}
(we remark that $  \hat V_1: H_B \to H_A, \hat V_1^\star; H_A \to H_B,$ so 
$  \hat V_1^\star \hat V_2 : H_B \to H_B).$ In particular, 
$$
||\hat V||^2=  \rm{Tr} \hat V^\star   \hat V.
$$
We will use the fundamental isomorphism
\begin{equation}
\label{BE1}
H_A \otimes H_B  \cong  {\cal L}(H_B,H_A).
\end{equation}
By exploiting this isomorphism, we establish a direct connection between the theory of operator-valued random variables (random matrices) and the quantum formalism based on density operators. 

We shall use the hat-symbol to denote operators; for a vector  $\Psi  \in H_A \otimes H_B,$ the corresponding operator is denoted as
$\hat \Psi$ and for an operator  $\hat V  \in {\cal L}(H_B,H_A),$ the corresponding vector is denoted    as $|V\rangle.$ We will often go from vectors to operators and vice verse. 

Isomorphism (\ref{BE1}) is defined as follows. Let $(|a \rangle)$ and $(|b \rangle)$ be two orthonormal bases in $H_A$ and $H_B$ respectively. Take any vector $|\Psi\rangle \in H,$ so 
$|\Psi\rangle= \sum_{a,b} k_{ab} |ab \rangle.$ The corresponding operator $\hat \Psi$ is defined as
\begin{equation}
\label{BE1a}
\hat \Psi |\phi\rangle= \sum_{a} \Large(\sum_b k_{ab} \langle b|\phi \rangle\Large) |a\rangle.
\end{equation}
The map 
\begin{equation}
\label{BE1a}
J: \Psi \to \hat \Psi
\end{equation}
is a unitary operator; its definition doesn't depend on selection of bases.
   
This construction and our formalism generally can be easily generalized to the infinite dimensional case by consideration of
Hilbert-Schmidt operators, see appendix A. This case can be interesting for physicists, since in the $L_2$-case the unitary operator $J$ maps  kernels to integral operators (see von Neumann \cite{VN}).    
 
We will also use so called {\it superoperators} - linear operators acting in the spaces of linear operators. We shall use the symbol 
``wide-hat'' to denote superoperators, as $\widehat{C}.$

\section{The density operator as a double covariance: from micro- to macro-scale correlations}

Let $(\Omega, {\cal F}, P)$ be a classical probability space (Kolmogorov \cite{K}): $\Omega$ is  a set of chance parameters (``elementary events''), ${\cal F}$ is collection of events, and $P$ is a probability measure defined on ${\cal F}.$ 

Let $H_A$ and $H_B$ be Hilbert spaces. Consider two stochastic processes:
\[
X(t): \Omega \to H_A, \quad Y(t): \Omega \to H_B.
\]
where the processes have zero mean value, $\mathbb{E}[X(t)]=0, E[Y(t)]=0$ for any $t\geq 0.$ We also assume that these processes have finite second order moments: $E[||X(t)||^2 ]< \infty, E[||Y(t)||^2] < \infty.$  

They describe stochasticity in two systems $S_A$ and $S_B;$ stochasticity in a composite system 
$S_{AB}=(S_A,S_B)$ is described by the process valued in $H_A \times H_B$ with the coordinate processes $X(t), Y(t).$  Stochasticity under consideration is classical. However,  we will see that it can be represented in quantum-like  way - by a density 
operator. Classicality is a feature on the micro-time dynamics. Transition from micro-time scale to macro-time scale leads to the quantum representation.

So, we consider two time scales: a micro-time scale and a macro-time scale. The micro- and macro-time variables are denoted as $t$ and $\tau.$ The scale of macro-time is determined  by an interval $\Delta,$ this is an instant of macro-time $\tau.$ 
The chance parameter $\omega \in \Omega$ describes en ensemble of intervals $\Delta,$ a sample of instances of macro-time.  

\paragraph{1. The Micro-scale Cross-Covariance Operator}

We define the windowed cross-covariance operator $\hat{C}_{\Delta} \in {\cal L}(H_B, H_A)$ over a time window $\Delta$ - the micro-scale cross-covariance, a bilinear form.  For vectors $h_a \in H_A$ and $h_b \in H_B$:
\begin{equation}
\label{L1}
\langle h_a | \hat{C}_{\Delta}| h_b \rangle = \frac{1}{|\Delta|} \int_{\Delta} \langle h_a | X(t) \rangle \langle Y(t) | h_b \rangle \, dt
\end{equation}
This definition ensures that $\hat{C}_{\Delta}$ acts linearly on $h_b \in H_B$ through the term $\langle Y(t) | h_b \rangle$. 
In operator notation,
\begin{equation}
\label{L2}
\hat{C}_{\Delta} = \frac{1}{|\Delta|} \int_{\Delta} |X(t)\rangle \langle Y(t)| \, dt \in {\cal L}(H_B,H_A).
\end{equation}
We point out that $\hat{C}_{\Delta}= \hat{C}_{\Delta}(\omega)$ is a random operator, 
$\hat{C}_{\Delta}: \Omega \to {\cal L}(H_B, H_A).$  

The matrix elements with respect to orthonormal bases $\{|a\rangle\}$ and $\{|b\rangle\}$ are:
\begin{equation}
\label{L3}
c_{ab} = \langle a | \hat{C}_{\Delta}  b \rangle = \frac{1}{|\Delta|} \int_{\Delta} X_a(t) \overline{Y_b(t)} \, dt
\end{equation}
where $X_a(t) = \langle a | X(t) \rangle$ and $\overline{Y_b(t)} = \langle Y(t) | b \rangle$. And all these quantities depend on a random parameter $\omega.$ 

\paragraph{2. Vectorization and the Macro-level}
We utilize the identification ${\cal L}(H_B, H_A) \cong H_A \otimes H_B$. Under this isomorphism, the random operator $\hat{C}_{\Delta}=\hat{C}_{\Delta}(\omega)$ is represented as a random vector $|C_{\Delta}\rangle=|C_{\Delta}\rangle (\omega)$ in the tensor product $H_A \otimes H_B$:
\begin{equation}
\label{L4}
|C_{\Delta}\rangle = \frac{1}{|\Delta|} \int_{\Delta} |X(t) \rangle \otimes |Y(t)\rangle \, dt
\end{equation}
The centered random variable representing the micro-scale fluctuations is 
\begin{equation}
\label{L5}
|Z_{\Delta}\rangle = |C_{\Delta}\rangle - \mathbb{E}[|C_{\Delta}\rangle],
\end{equation}
where $\mathbb{E}$ denotes the mathematical expectation w.r.t. probability $P$ - statistical expectation.
 
\paragraph{3. The Macro-Covariance Operator}
The macro-covariance operator $\hat{C}$ is defined as the covariance of the $H_A\otimes H_B$-valued random variable $|Z_{\Delta}\rangle$. Following the convention of linearity in the second argument:
\begin{equation}
\label{L6}
\hat{C} = \mathbb{E}\left[ |Z_{\Delta}\rangle \langle Z_{\Delta}| \right] \in {\cal L}(H_A \otimes H_B)
\end{equation}
This operator $\hat{C}$ is Hermitian and positive semi-definite. The normalized density operator is given by $\rho = \hat{C}/\text{Tr}(\hat{C})$.

\subsection{Density operator from micro-scale time series}

This abstract framework can be operationalized through the following scheme, which allows for experimental verification.

We consider again two time scales: a micro-time scale and a macro-time scale. The micro-time variable are denoted by $t.$
The scale of macro-time is determined  by an interval $\Delta,$ this is an instant of macro time.   
The macro-time variable is denoted by  $\tau;$ in the discrete framework:$
\tau_k = k\Delta, \qquad k = 0,1,2,\dots .$ We define the associated micro-scale time windows as
$
\Delta_k := [k\Delta,(k+1)\Delta).
$
Fix a sufficiently large integer $N$ and a macro-scale time interval
$
[0,\mathcal{T}], \qquad \mathcal{T} = N\Delta,
$
so that $
[0,\mathcal{T}) = \Delta_0 \cup \cdots \cup \Delta_{N-1},
$
where the intervals $\Delta_k$ are non-overlapping.

Consider two time series $X(t_i)$ and $Y(t_i)$, with $i=1,2,\dots$. For each interval $\Delta_j$ (corresponding to the macro-time instant $\tau_j$), we compute the sample cross-correlation at the micro scale,
\begin{equation}
\label{L7}
C(\tau_j) :=
\frac{1}{| \{t_k \in \Delta_j\}|}
\sum_{t_k \in \Delta_j}
\bigl( X(t_k) \otimes Y(t_k) \bigr).
\end{equation}
This defines a time series taking values in the tensor-product Hilbert space $H_A \otimes H_B$.

We centralize this sample by subtracting its empirical mean,
\begin{equation}
\label{L8}
Z(\tau_j) := C(\tau_j) - \bar{C}, 
\qquad
\bar{C} := \frac{1}{N} \sum_{j=1}^N C(\tau_j).
\end{equation}
This notation implies that each $Z(\tau_j)$ is interpreted as a vector in the Hilbert space $H_A \otimes H_B$.

Finally, we define the macro-scale covariance operator by
\begin{equation}
\label{L9}
\hat C =
\frac{1}{N}
\sum_{j=1}^N
\lvert Z(\tau_j) \rangle \langle Z(\tau_j) \rvert
\;\in\;
\mathcal{L}(H_A \otimes H_B),
\end{equation}
the macro-covariance operator associated with the aggregated micro-scale dynamics.

\subsection{Density operator as a covariance of random operator}

Now consider an random variable $ \hat \Sigma=   \hat \Sigma(\omega),$ valued in ${\cal L}(H_B, H_A),$ its covariance $\widehat{C}(\Sigma)$ is a linear operator acting 
in the space ${\cal L}(H_A, H_B),$ so called {\it superoperator} defined by its quadratic form,
\begin{equation}
\label{SP1}
\langle   \hat V_1| \widehat{C}(\Sigma)  \hat V_2\rangle= 
\mathbb{E} [\langle   \hat V_1|  \hat \Sigma\rangle \langle   \hat \Sigma| \hat V_2\rangle] 
\end{equation}
where $ \hat V_1,   \hat V_2 \in {\cal L}(H_B, H_A).$ So, $\widehat C(\Sigma) \in  {\cal L}({\cal L}(H_A, H_B)).$ 

This definition implies that $\widehat C(\Sigma)$ can be represented as the expectation of the random rank-one superoperator formed by the outer product of $\hat \Sigma$ with itself, namely 
\begin{equation}
\label{Sigma}
\widehat C_\Sigma = \mathbb{E}[|\hat \Sigma\rangle\langle \hat \Sigma|],
\end{equation}
where $|\hat \Sigma\rangle\langle \hat \Sigma| \hat V= \langle\hat \Sigma| \hat V\rangle \hat \Sigma.$ 

Now the random variable  $Z_\Delta= Z_\Delta(\omega)$ valued in $H_A \otimes H_B$ can be treated as ${\cal L}(H_A, H_B)$-valued random variable $\hat Z_\Delta$ (see  (\ref{BE1})). Its covariance operator $\hat C$ considered as an element of 
$ {\cal L}(H_A\otimes H_B)$ is now realized as the covariance (super-)operator $\widehat C=\widehat C_{Z_\Delta} \in {\cal L}({\cal L}(H_A, H_B))$ of the operator-valued random variable $\hat Z_\Delta,$   
$$
C_\Sigma = \mathbb{E}[|\Sigma\rangle\langle\Sigma|].
$$

\section{Construction of classical stochastic processes behind density operators}

We start with pure states and consider the most striking example - a maximally entangled state, one of the Bell states. 

\subsection{Bell states from classical correlations}
\label{Bell}

Consider the Bell state
\[
\ket{\Phi^+} = \frac{1}{\sqrt{2}}\left(\ket{0}\otimes\ket{0} + \ket{1}\otimes\ket{1}\right) \in H_A \otimes H_B.
\]

Split a macro window $\Delta$ into two subintervals:
\[
\delta_0 = [0, \Delta/2], \quad \delta_1 = [\Delta/2, \Delta].
\]

Introduce two real valued random variable $\xi_A=\xi_A(\omega)$ and $\xi_B=\xi_B(\omega)$ 
describing macro-randomization in systems $S_A$ and $S_B$ respectively such that 
\begin{equation}
\label{RVS}
E[\xi_A]=E[\xi_B]=0, E[\xi_A \xi_B]=0, E[\xi_A^2 \xi_B^2]=2.
\end{equation}
These are uncorrelated random variables, $\rm{Cov}(\xi_A,\xi_B)=0.$ 
In particular, they can be independent random variables with zero mean values and with normalization 
$E[\xi_A^2]=\sqrt{2}, E[\xi_B^2]=\sqrt{2}.$ The random variables can be discrete and take e.g. values $\pm 1.$ 
In this example 
correlations are concentrated at the micro level.  We remark that $\omega$ is a macro-parameter, selection of behaviour of systems 
during time window $\Delta.$

Assign separable Schmidt components to each subinterval:
\[
(X(t,\omega), Y(t,\omega)) =
\begin{cases}
(\xi_A(\omega) \ket{0}, \xi_B(\omega) \ket{0}) & t \in \delta_0,\\
( \xi_A(\omega) \ket{1}, \xi_B(\omega)\ket{1}) & t \in \delta_1.
\end{cases}
\]

Then the micro-level average is
\[
C_{\Delta}(\omega)  = \frac{1}{\Delta} \int_0^\Delta X(t)\otimes Y(t)\, dt
= \xi_A(\omega) \xi_B(\omega)
(\frac{1}{2}\ket{0}\otimes\ket{0} + \frac{1}{2}\ket{1}\otimes\ket{1}) =
\] 
\[
\xi_A(\omega) \xi_B(\omega) \ket{\Phi^+}/\sqrt{2}.
\]
This is a random vector  belonging to $H_A\otimes H_B,$ and its ensemble average (macro-average)  
$E[C_{\Delta}] =E[\xi_A \xi_B] \Phi^+/\sqrt{2}=0,$ so $Z_\Delta = C_{\Delta}.$ Hence,
$$
\hat C = E[|C_{\Delta}\rangle \langle C_{\Delta}|] =
(1/2) \mathbb{E}[\xi_A^2 \xi_B^2] |\Phi^+ \rangle \langle \Phi^+|=|\Phi^+ \rangle \langle \Phi^+|.
$$      
We emphasize once again that subquantum stochastic processes are determined non-uniquely. Above, we presented a simple illustrative example; however, one can construct models with substantially richer internal randomness.

\subsection{Generation of an arbitrary pure state}
\label{pure}

Here we present the simplest scheme of generation of an arbitrary pure state within DCM, similar to the scheme for 
the Bell state $|\Phi^+\langle;$ more complex stochastic processes can be generated with the scheme 
of section \ref{DS}. 

Each vector $|\psi \rangle \in H_A \otimes H_B$ admits a Schmidt decomposition
\[
|\psi\rangle = \sum_{\ell=1}^{r} s_{\ell}\;|u_\ell \rangle \otimes |v_\ell\rangle,
\]
where $|u_\ell\rangle \in H_A$ and $|v_\ell\rangle \in H_B,$ $s_{\ell} \geq 0.$ 
If $|\psi\rangle$ is a quantum state, then $\sum_{\ell=1}^{r} |s_\ell|^2 = 1.$
(Vectors $(u_{\ell})$ are orthonormal as well as vectors $(v_{\ell}),$ but this property is not used in our construction.)

We now generalize the scheme that was used for generation of the Bell state $|\Phi^+\rangle,$
\begin{enumerate}
\item Partition $\Delta$ into $r$ subintervals $\delta_\ell$ 
of the lengths $(s_\ell/\sum_\ell s_\ell) \Delta.$ 

\item On the $\delta_\ell$-th subinterval, set
\[
X(t,\omega) =  \xi_A(\omega)|u_\ell\rangle,
\qquad
Y(t,\omega) = \xi_B(\omega) |v_\ell\rangle ,
\]
\end{enumerate}
where the random variables $\xi_A(\omega), \xi_B(\omega)$ satisfy conditions similar to conditions (\ref{RVS}),
\begin{equation}
\label{RVS1}
E[\xi_A]=E[\xi_B]=0, E[\xi_A \xi_B]=0, E[\xi_A^2 \xi_B^2]=(\sum_\ell s_\ell)^2.
\end{equation}
In particular, we can consider two independent random variables with zero mean values, such that 
$E[\xi_A^2]=E[\xi_B^2]=(\sum_\ell s_\ell).$  The random variables can be discrete and take e.g. values $\pm 1.$  
Thus, entanglement is generated by microcorrelations.

The  micro-covariance is given by 
\[
C_\Delta( \omega) = \frac{1}{|\Delta|} \int_{\Delta} X(t,\omega)\otimes Y(t,\omega)\,dt
= \frac{\xi_A(\omega) \xi_B(\omega)}{\sum_\ell s_\ell} \sum_{\ell=1}^{r} s_\ell |u_\ell\rangle \otimes |v_\ell \rangle.
=\frac{\xi_A( \omega) \xi_B( \omega)}{\sum_\ell s_\ell} |\psi\rangle.
\]
we remark that due to our construction, this is a centered random variable, $\mathbb{E}[C_\Delta]=0.$
Now we find its macro-covariance
$$
\hat C_\Delta= \frac{E[ \xi_A^2 \xi_B^2]}{(\sum_\ell s_\ell)^2} |\psi\rangle \langle \psi|= |\psi\rangle \langle \psi|.
$$

\subsection{Generation of mixed states}

Let $H_A$ and $H_B$ be complex Hilbert spaces and let
$\rho$ be a density operator acting  on the tensor product
$H = H_A \otimes H_B$. Consider its spectral decomposition
\[
\rho
=
\sum_{k=1}^{M}
\lambda_k
\ket{\psi_k}\bra{\psi_k},
\qquad
\lambda_k \ge 0,
\qquad
\sum_{k=1}^M \lambda_k = 1,
\qquad
|\psi_k\rangle \in H_A \otimes H_B .
\]
Suppose that there is a random generator selecting time-window $\Delta;$ so an ensemble of time-widows is created. The micro correlations during these time windows generate only the states $\rho_k= |\psi_k\rangle\langle \psi_k|, k=1,..., M.$ 
Assign label $k$ to intervals with the output  $\rho_k : \Delta= \Delta_k.$ Denote the pair of  stochastic processes behind $\rho_k$ as $X_k(t, \omega), Y_k(t, \omega)$ and the corresponding micro correlation as
$S_k(\omega) = C_{\Delta_k}(\omega).$  Suppose that there is a random generator $\eta=\eta(\omega)$ selecting the interval of the $k$-type with probability $\lambda_k.$ As we see from the probabilistic lemma below, if $\eta$ is independent of random variables $S_k,k=1,...,M,$  then this process generates the density operator $\rho.$     

\begin{lemma} [Random selection of stochastic processes]
Let $(\Omega,\mathcal F,\mathbb P)$ be a probability space.
Let $\{S_k\}_{k=1}^M$ be a family of random variables with values in a measurable space $(E,\mathcal E)$,
and let $\eta$ be a discrete random variable taking values in $\{1,\dots,M\}$ with
\[
\mathbb P(\eta = k) = \lambda_k.
\]
Assume that $\eta$ is independent of the family $\{S_k\}_{k=1}^M$.
Define the random variable
\[
S := S_{\eta}.
\]
Then, for any measurable function $f : E \to L,$ where $L$ is a (finite-dimensional) linear space, such that the expectations exist,
\[
\mathbb E[f(S)]
=
\sum_{k=1}^M
\lambda_k \,
\mathbb E\!\left[f(S_k)\right].
\]
\end{lemma}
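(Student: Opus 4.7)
The plan is to reduce the claim to a clean application of independence plus linearity of expectation, via a partition of $\Omega$ according to the value of $\eta$. Concretely, I would write
\[
f(S) = f(S_\eta) = \sum_{k=1}^M \mathbf{1}_{\{\eta=k\}}\, f(S_k),
\]
which is just a tautological case-by-case rewriting (on the event $\{\eta=k\}$ we have $S=S_k$, and the indicators are mutually exclusive and sum to $1$). Taking expectations and exchanging the finite sum with $\mathbb{E}$, I reduce the problem to showing $\mathbb{E}[\mathbf{1}_{\{\eta=k\}} f(S_k)] = \lambda_k\, \mathbb{E}[f(S_k)]$ for each $k$.

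Next I would invoke the independence hypothesis. Since $\eta$ is independent of the whole family $\{S_j\}_{j=1}^M$, in particular $\eta$ is independent of $S_k$, and hence $\mathbf{1}_{\{\eta=k\}}$ is independent of $f(S_k)$ (as $f(S_k)$ is a measurable function of $S_k$ alone, and indicators of $\sigma(\eta)$-events are independent of $\sigma(S_k)$-measurable objects). The product rule for expectations of independent random variables then gives
\[
\mathbb{E}\bigl[\mathbf{1}_{\{\eta=k\}} f(S_k)\bigr]
= \mathbb{E}[\mathbf{1}_{\{\eta=k\}}]\, \mathbb{E}[f(S_k)]
= \lambda_k\, \mathbb{E}[f(S_k)].
\]
Summing over $k$ yields the claimed formula.

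One small subtlety to address is that $f$ is vector-valued in $L$, whereas the usual product rule is typically stated for scalar variables. Since $L$ is finite-dimensional, I would fix a basis and apply the scalar product rule component by component; equivalently, for any linear functional $\varphi\in L^\star$, $\varphi\circ f$ is scalar-valued and measurable, so the identity holds for $\varphi\circ f$, and since $L^\star$ separates points of $L$, the vector-valued identity follows. Alternatively, one can cite that independence is preserved under measurable functions and that Bochner/finite-dimensional expectations satisfy the same product rule.

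I do not expect any genuine obstacle: the result is a classical consequence of independence and the tower/partition property of conditional expectation (one could equivalently condition on $\eta$ and write $\mathbb{E}[f(S)] = \sum_k \lambda_k\, \mathbb{E}[f(S_k)\mid \eta=k] = \sum_k \lambda_k\, \mathbb{E}[f(S_k)]$, where the last equality uses independence of $\eta$ from $S_k$). The only point requiring mild care is bookkeeping for the vector-valued target space, and this is handled by the componentwise reduction described above.
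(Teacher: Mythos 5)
Your proof is correct: the decomposition $f(S)=\sum_{k}\mathbf{1}_{\{\eta=k\}}f(S_k)$, linearity, and the product rule for the independent pair $(\mathbf{1}_{\{\eta=k\}},f(S_k))$, with the componentwise reduction for the vector-valued target, is exactly the standard argument this lemma calls for; the paper itself states the lemma without supplying a proof, so there is nothing further to compare against.
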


In our example, the measurable space $(E,\mathcal E)$ is given by $E=H_A \otimes H_B$ and $\mathcal E$ is the $\sigma$-algebra of Borel subsets of $E;$ random variables  $S_K$ are based on stochastic processes for generation of $\rho_k$ (see, e.g., section \ref{pure} with $|\psi\rangle= |\psi_k\rangle),$ and $f: H_A \otimes H_B \to {\cal L}(H_A \otimes H_B), \; f(z)= |z\rangle\langle z|.$    

\subsection{Generalization of micro-dynamics to jump processes}
\label{DS}

The scheme presented above utilizes locally constant processes defined via characteristic functions on a fixed partition of the macro-window $\Delta$. We now generalize this scheme to general jump processes, where transitions occur at random times.

Let $\{|u_\ell\rangle, |v_\ell\rangle\}_{\ell=1}^r$ be the fixed set of $r$ vector pairs, e.g., Schmidt components for the desired entangled state.

Instead of the piecewise constant processes $X(t, \omega)$ and $Y(t, \omega)$ defined on fixed subintervals, we consider a sequence of random jump times $0 = \tau_0 < \tau_1 < \dots < \tau_{N(\omega)} = \Delta$. Here, $N(\omega)$ is a counting process (such as a Poisson process) representing the number of stochastic events within the window $\Delta$.

Define a mapping $\sigma(j, \omega): \{1, \dots, N(\omega)\} \to \{1, \dots, r\}$ - a random selector. This function randomly assigns one of the $r$ available states to the $j$-th jump interval.

The processes are generalized as:

\begin{equation}
X(t, \omega) = \sum_{j=1}^{N(\omega)} x_j(t, \omega) \chi_{[\tau_{j-1}, \tau_j)}(t) |u_{\sigma(j, \omega)}\rangle,
\end{equation}
\begin{equation}
Y(t, \omega) = \sum_{j=1}^{N(\omega)} y_j(t, \omega) \chi_{[\tau_{j-1}, \tau_j)}(t) |v_{\sigma(j, \omega)}\rangle,
\end{equation}
where $\chi_\delta$ denotes the charateristic function of interval $\delta.$

The micro-covariance operator $\hat C_{\Delta}(\omega)$  is now an integral over these random intervals:
\begin{equation}
\hat C_{\Delta}(\omega) = \frac{1}{\Delta} \sum_{j=1}^{N(\omega)} \int_{\tau_{j-1}}^{\tau_j} x_j(t, \omega) \overline{y_j(t, \omega)} dt 
|u_{\sigma(j, \omega)}\rangle \langle v_{\sigma(j, \omega)}|.
\end{equation}

Entanglement still emerges from the pathwise alignment of the processes at the micro-scalw, even if the jump times and values are statistically independent at the macro level.

\subsubsection*{Micro-time Consistency}
The concept of micro-time consistency is formalizes an almost-sure constraint on the entire sample path of the jump process.

Let $C \subset H_A \times H_B$ be a set of allowed consistency relations. The pair $(X, Y)$ is said to be micro-time consistent for a general jump process if:
\begin{equation}
P \left( \forall t \in [0, \Delta] : (X(t, \omega), Y(t, \omega)) \in C \right) = 1
\end{equation}
except possibly at the discrete jump instants $\{\tau_j\}$[.

For the Bell states $|\Phi^+\rangle,$ 
 $C = \{ (\lambda_A|j\rangle, \lambda_B|j\rangle) : \lambda_A, \lambda_B \in \mathbb{C}, j \in \{0, 1\} \}.$ This set restricts the pair $(X, Y)$ such that at any given micro-time $t$, both processes must be proportional to the same basis vector ($|0\rangle$ or $|1\rangle$). While this consistency condition is a necessary pathwise constraint, we note that it does not uniquely determine the concrete state (e.g., distinguishing $|\Phi^+\rangle$ from $|\Phi^-\rangle$); that distinction requires the calculation of specific micro-macro correlations.

This approach demonstrates that the DCM framework is not restricted to step functions but applies to any stochastic process where micro-level fluctuations are synchronized according to a global covariance structure. See alo appendix B on further coupling with theory of classical stochastic processes.

\section{Deriving subsystem states from composite systems}

Beginning with a stochastic derivation of the state of a composite system, we now perform transition to the stochastic origin of individual subsystem states. While this approach is somewhat unconventional, it is highly intuitive within the context of quantum mechanics, especially {\it theory of open quantum systems.} In quantum studies, the state $\rho_{AB}$ of a composite system $S_{AB}$ cannot generally be reconstructed from the individual states $\rho_A$ and $\rho_B$ of its components $S_A$ and $S_B.$ However, the states of the subsystems are uniquely determined by the global state through the partial trace operation:
\begin{equation}
\label{consistency}
\rho_A = \text{Tr}_{H_B} \rho_{AB}, \quad \rho_B = \text{Tr}_{H_A} \rho_{AB}
\end{equation}
A classical probabilistic derivation of these formulas is provided in section \ref{ptrace}, subject to specific constraints on the underlying stochastic processes. For the present discussion, we treat the classical stochastic representation of $\rho_{AB}$ primarily as a conceptual foundation for representing the states of individual systems.

\subsubsection*{The myth of the isolated system: A DCM perspective}

A fundamental question arises within the DCM framework: {\it Do truly isolated quantum physical systems exist, or is an individual system always, by necessity, a subsystem of an encompassing environment?} At first glance, this suggests a potential logical circularity: if an individual system is defined as a marginalized subsystem of a composite, but the composite itself is an individual system at a larger scale, where does the definition ground itself?

\subsubsection*{Breaking the circularity: scale-dependent identity}

DCM avoids this logical circle through its treatment of \textbf{scales of synchronization}. In this framework, the definition of a ``system'' is not an absolute ontological category; rather, it is defined by the \textbf{window of synchronization} ($\Delta$).

\begin{itemize}
    \item \textbf{The Composite Scale:} A system is viewed as ``Composite'' at the temporal or structural scale where the micro-synchronization between its internal processes ($X(t)$ and $Y(t)$) is explicitly resolved.
    \item \textbf{The Individual Scale:} That same system becomes an ``Individual'' entity at a higher macro-scale, where internal parts are treated as a unified, emergent fluctuation. 
\end{itemize}

The circle is broken by the \textbf{partial trace operation}. When moving from $\rho_{AB}$ to $\rho_A$, the observer performs a mathematical and physical coarse-graining. The trace operation signifies a shift in the level of description: the synchronization between $A$ and $B$ is no longer the object of study, but rather the resulting aggregate intensity and fluctuations of $A$ itself.

\subsubsection*{Individual systems as marginal residues}

If we follow the logic of the DCM to its conclusion, a truly isolated system is a mathematical idealization. Because the subquantum stochastic processes are likely manifestations of a global field, $X(t)$ is never truly independent. 

\begin{quote}
    In the DCM, the distinction between ``composite'' and ``individual'' is relational rather than ontological. An individual system is essentially the \textbf{marginal residue} of the global field that remains after we lose track of, or purposefully discard, the external correlations (synchronizations) with the environment.
\end{quote}

Therefore, the ``Quantum State'' $\rho_A$ of an individual system is not a standalone primitive. It is a reduced description of the system's participation in a larger, synchronized whole. The appearance of an isolated system occurs only when the second-order covariance between the system and the rest of the universe becomes negligible or static relative to the macro-observer's window.

\subsubsection*{Implications for the universal state}

This hierarchical view implies that the only truly ``Individual'' system that is not a subsystem would be the Universe itself. Within the DCM, the Universe would be described as a state of perfect global micro-synchronization. The existence of mixed local states and the necessity of the density operator formalism are thus direct consequences of our status as local observers who can only ever perceive a fraction of the total covariance structure.

\section{A stochastic realization of the partial trace identity}
\label{ptrace}

We come to the stochastic representation of the state $\rho_A$ of subsytem $S_A$ of a composite system $S_{AB}$ through stochastic implementation of equality  $\text{Tr}_{H_B} \rho_{AB} = \rho_A$ by explicitly calculating the partial trace and applying micro-synchronization heuristics.

The $H_A\otimes H_B$ valued random variable $Z_\Delta$ can be expanded with respect the basis composed of two orthonormal bases, 
 $(|i\rangle) \in H_A$ and  $(|m\rangle) \in H_B:$
$
Z_\Delta= \sum_{im} z_{im} |i\rangle |m\rangle,
$
or in the operator realization $\hat Z_\Delta= \sum_{im} z_{im} |i\rangle \langle m|.
$
In DCM density operators correspond to double covariance operators (with the trace one normalization); so the matrix elements of 
$\rho_{AB}$ can be expressed as $(\rho_{AB})_{im,jk}= \mathbb{E}[z_{im} \bar{z}_{jm}].$ 

\subsubsection{Partial trace of covariance}
By definition of the partial trace:
\begin{equation}
(\text{Tr}_{H_B} \rho_{AB})_{ij} = \sum_m \mathbb{E}[z_{im} \bar{z}_{jm}]
\end{equation}
Substituting the micro-level integral definition of $z_{im}$:
\begin{equation}
(\text{Tr}_{H_B} \rho_{AB})_{ij} = \frac{1}{\Delta^2} \iint_{\Delta \times \Delta} \mathbb{E} \left[ X_i(t) \bar{X}_j(t') \left( \sum_m Y_m(t) \bar{Y}_m(t') \right) \right] dt \, dt',
\end{equation}
where $X_i(t)= \langle i|X(t) \rangle, Y_m(t)=  \langle i|Y(t) \rangle.$

\subsubsection*{The stochastic reference kernel}
We define the stochastic kernel $K_B(t, t', \omega)$ as the inner product of the micro-signals in $H_B$:
\begin{equation}
K_B(t, t', \omega) = \sum_a Y_a(t, \omega) \bar{Y}_a(t', \omega) = \langle Y(t', \omega) | Y(t, \omega) \rangle_{H_B}.
\end{equation}
The partial trace is then:
\begin{equation}
(\text{Tr}_{H_B} \rho_{AB})_{ij} = \frac{1}{\Delta^2} \iint_{\Delta \times \Delta} \mathbb{E} \left[ X_i(t) \bar{X}_j(t') K_B(t, t', \omega) \right] dt \, dt'.
\end{equation}

\subsubsection*{Mathematical formalization of micro-scale synchronization}
To exclude dependence on the auxiliary process $Y(t)$ describing randomness in system $S_B,$ we impose the following refined conditions on the micro-processes:

\begin{enumerate}
    \item \textbf{Temporal Localization (Delta-Correlation):} The kernel $K_B$ behaves as a nascent Dirac delta function $\delta_{\epsilon}(t-t')$ relative to the macro-window $\Delta$, where the correlation time $\epsilon \ll \Delta$:
    \begin{equation}
        K_B(t, t', \omega) \approx  \Delta \delta_\epsilon(t - t') \|Y(t, \omega)\|^2 .
    \end{equation}
    This collapses the double integral into a single time average.

    \item \textbf{Energy Normalization:} The auxiliary system $B$ is physically normalized such that its mean instantaneous power is unity:
    \begin{equation}
        \mathbb{E}[\|Y(t, \omega)\|^2] = 1 .
    \end{equation}

    \item \textbf{Statistical Isotropic Power:} To allow for entanglement without biasing the marginal, we require that fluctuations in the power of $Y$ are uncorrelated with the dyadic product of $X.$ This condition ensures that even if the \textit{phases} of $X$ and $Y$ are coupled, the total energy of the reference system $S_B$ does not modulate the local statistics of $S_A$.
\end{enumerate}

Under these conditions, the expectation factors as $$
\mathbb{E}[X_i \bar{X}_j \|Y\|^2] = \mathbb{E}[X_i \bar{X}_j] \mathbb{E}[\|Y\|^2].
$$, and the integral yields:
\begin{equation}
\label{sub1}
(\rho_A)_{ij}\equiv (\text{Tr}_{H_B} \rho_{AB})_{ij} = 
\frac{1}{\Delta} \int_{\Delta} \mathbb{E}[X_i(t) \bar{X}_j(t)] dt. 
\end{equation}
This proves that the partial trace in quantum mechanics is equivalent to the marginalization of hidden correlations in the underlying Kolmogorov space.

To be able to perform the corresponding computations for $\rm{Tr}_{H_A} \rho_{AB},$ we should also put the same constraints on the process
 $X(t).$ Now system $S_A$ is considered as an auxiliary system for system $S_B.$  Hence, we impose the condition of energy normalization: 
\begin{equation}
\label{energyX}
        \mathbb{E}[\|X(t, \omega)\|^2] = 1 .
\end{equation}
and statistical isotropic power

Now consider $S_A$ as an individual system and let its micro-macro time randomness is described by a stochastic process 
$X(t).$ Set 
\begin{equation}
\label{sub2}
C_{\Delta;A}(\omega) = \frac{1}{\Delta} \int_{\Delta} |X(t,\omega)\rangle \langle X(t,\omega)| dt.
\end{equation}
This is micro autocorrelation of a stochastic process $X(t).$ It can be treated 
as the operator-valued random variable $\hat C_{\Delta;A}(\omega)$ valued in ${\cal L}(H_A)$  By averaging $\hat C_{\Delta;A}(\omega)$ w.r.t. the macro randomness parameter $\omega$ we obtain the operator 
\begin{equation}
\label{sub3}
\hat C_A= \mathbb{E}[\hat C_{\Delta;A}].
\end{equation}
We stress that this is not the second order covariance, but the statistical averaging of the first order micro-time covariance. We remark that due to the condition of energy normalization (\ref{energyX})
$$
\rm{Tr} \hat C_A=  \frac{1}{\Delta} \int_{\Delta}  \mathbb{E}[\|X(t, \omega)\|^2]= 1.
$$
Hence, 
 \begin{equation}
\label{sub4}
\rho_A^{\rm{intrinsic}} = \hat C_A
\end{equation}
is a density operator. 
This equality can be used as the intrinsic definition of the state of system $S_A$ in terms of the subquantum classical stochastic process. 

Denote by $\hat C_{AB}$ the double covariance operator of the stochastic process $X(t),Y(t)$ of the composite system $S_{AB}:$ 
$\hat C_{AB} = \mathbb{E}[ |Z_{AB}\rangle\langle Z_{AB})| ],$ where $Z_{AB} \equiv Z_{\Delta}$ in notation of the previous sections. 
We remark that 
$$
\rm{Tr} \hat C_{AB} = \frac{1}{\Delta} \int_\Delta  \mathbb{E} (||X(t)||^2 ||Y(t)||^2) dt =1.
$$
Hence, this is a density operator, $\rho_{AB} = \hat C_{AB}.$  Then under above conditions 
\begin{equation}
\label{sub6}
\rho_A^{\rm{intrinsic}}= \rho_A= \rm{Tr}_{H_B} \rho_{AB}.
\end{equation}
Hence the intrinsic state's definition for $S_A$ coincides with the partial trace definition of the state of $S_A$ as a subsystem of composite system $S_{AB}.$

\section{Consistency of Synchronization and Entanglement}

The ``Subinterval Allocation Scheme'' used to construct the Bell state $|\Phi^+\rangle$ (section \ref{Bell}) is a specific realization of micro-scale synchronization.

\subsubsection*{Macro-randomized synchronization in the Bell state}

In section \ref{Bell}, to construct the Bell state $|\Phi^+\rangle$, we incorporate the random variables $\xi_A(\omega)$ and $\xi_B(\omega)$ which describe macro-randomization in systems $S_A$ and $S_B$. We define the processes as:
$
(X(t, \omega), Y(t, \omega)) = 
\begin{cases} 
(\xi_A(\omega)|0\rangle, \xi_B(\omega)|0\rangle) & t \in [0, \Delta/2] \\
(\xi_A(\omega)|1\rangle, \xi_B(\omega)|1\rangle) & t \in [\Delta/2, \Delta]
\end{cases}
$
where $\Delta$ is the macro-window and the macro-parameters satisfy $
\mathbb{E}[\xi_A]=\mathbb{E}[\xi_B]=0, \quad \mathbb{E}[\xi_A \xi_B]=0, \quad \mathbb{E}[\xi_A^2 \xi_B^2]=2.
$
\subsubsection{Stochastic kernel and partial trace}
Following the construction in section \ref{ptrace}, we define the stochastic reference kernel $K_B(t, t', \omega)$ as the inner product of the micro-signals in $H_B$:
\begin{equation}
K_B(t, t', \omega) = \langle Y(t', \omega) | Y(t, \omega) \rangle_{H_B} = \xi_B^2(\omega) \left( \chi_{\delta_0}(t)\chi_{\delta_0}(t') + \chi_{\delta_1}(t)\chi_{\delta_1}(t') \right)
\end{equation}
where $\delta_0 = [0, \Delta/2]$ and $\delta_1 = [\Delta/2, \Delta]$. The partial trace $\rho_A$ is calculated via the double integral: 
$$
(\rho_A)_{ij} = \frac{1}{\Delta^2} \iint_{\Delta \times \Delta} \mathbb{E} \left[ X_i(t, \omega) \bar{X}_j(t', \omega) K_B(t, t', \omega) \right] dt \, dt'.
$$
Since $K_B$ vanishes when $t$ and $t'$ are in different subintervals, the integral simplifies. By substituting the specific values for $X(t, \omega)$, the expression for $\rho_A$ becomes:
\begin{equation}
\rho_A = \frac{1}{\Delta^2} \mathbb{E}[\xi_A^2 \xi_B^2] \left( \iint_{\delta_0^2} |0\rangle\langle 0| dt dt' + \iint_{\delta_1^2} |1\rangle\langle 1| dt dt' \right).
\end{equation}
Given that $\mathbb{E}[\xi_A^2 \xi_B^2] = 2$ and the area of each sub-square is $(\Delta/2) \times (\Delta/2) = \Delta^2/4$, we obtain:
\begin{equation}
\rho_A = \frac{1}{\Delta^2} \cdot 2 \cdot \left( \frac{\Delta^2}{4} |0\rangle\langle 0| + \frac{\Delta^2}{4} |1\rangle\langle 1| \right) = \frac{1}{2} (|0\rangle\langle 0| + |1\rangle\langle 1|).
\end{equation}
This yields the standard trace-one mixed state.

\subsubsection*{Energy Normalization and Intrinsic State}
In this model, the state of the composite system $\rho_{AB}$ is defined as the double covariance operator $\hat{C}_{AB} = \mathbb{E}[|C_\Delta\rangle \langle C_\Delta|]$. Its trace is given by:
\begin{equation}
\text{Tr} \hat{C}_{AB} = \frac{1}{\Delta} \int_\Delta \mathbb{E}[\|X(t)\|^2 \|Y(t)\|^2] dt = \frac{1}{\Delta} \left( \Delta \cdot \frac{1}{2} \mathbb{E}[\xi_A^2 \xi_B^2] \right) = 1.
\end{equation}
Under the condition of statistical isotropic power, the intrinsic state $\rho_A^{\text{intrinsic}}$ derived from the micro-autocorrelation of $X(t)$ matches the result of the partial trace:
\begin{equation}
\rho_A^{\text{intrinsic}} = \mathbb{E} \left[ \frac{1}{\Delta} \int_\Delta |X(t, \omega)\rangle \langle X(t, \omega)| dt \right] = \frac{1}{2} (|0\rangle\langle 0| + |1\rangle\langle 1|).
\end{equation}
This demonstrates that the normalization of the local state is preserved by the coupling between the macro-randomization ($\xi$) and the micro-scale temporal allocation.

\subsection{Discrete synchronization in Schmidt-decomposed processes}

We consider the specific construction where the density operator $\rho$ is realized via deterministic subinterval allocation within a macro-window $\Delta$.

\subsection{The Discrete Partition}
Let the macro-window $\Delta$ be partitioned into $r$ disjoint subintervals $\Delta = \bigcup_{k=1}^r \delta_k$, where $|\delta_k| = w_k \Delta$ and $\sum w_k = 1$. The micro-processes are defined as:
\begin{equation}
X(t, \omega) = u_k, \quad Y(t, \omega) = v_k \quad \text{for } t \in \delta_k
\end{equation}
where $\{u_k\}$ and $\{v_k\}$ are the Schmidt vectors for a state $\psi \in H_A \otimes H_B$.

\begin{lemma}[The Discrete Synchronization Condition]
The identity $\rm{Tr} _B \rho_{AB} = \rho_A^{\text{intrinsic}}$ is exactly satisfied if and only if the auxiliary process $Y(t)$ satisfies the \textbf{Orthonormal Block Kernel} condition:
\begin{equation}
K_B(t, t') = \langle Y(t') | Y(t) \rangle = \sum_{k=1}^r 1_{\delta_k}(t) 1_{\delta_k}(t')
\end{equation}
where $1_{\delta_k}$ is the indicator function of the $k$-th subinterval.
\end{lemma}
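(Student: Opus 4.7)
The claim is a biconditional about the kernel $K_B(t,t')=\langle Y(t')|Y(t)\rangle$, so I would work with the two representations of $\rho_A$ derived earlier: the double-integral formula for $\text{Tr}_{H_B}\rho_{AB}$ from Section~\ref{ptrace} and the single-integral micro-autocorrelation (\ref{sub2})--(\ref{sub4}) for $\rho_A^{\text{intrinsic}}$. Both sides are linear in the operators $|u_k\rangle\langle u_l|$, so the core strategy is to equate matrix coefficients in this basis and exploit the orthonormality of the Schmidt vectors $\{u_k\}\subset H_A$.

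\textbf{Sufficiency.} Assuming $K_B(t,t')=\sum_{k}1_{\delta_k}(t)1_{\delta_k}(t')$, I would substitute into $(\text{Tr}_{H_B}\rho_{AB})_{ij}=\Delta^{-2}\iint\mathbb{E}[X_i(t)\bar{X}_j(t')K_B(t,t')]\,dt\,dt'$. The product of indicators confines $(t,t')$ to a common diagonal block $\delta_k\times\delta_k$ and decouples the double integral into $\Delta^{-2}\sum_k\bigl(\int_{\delta_k}X_i(t)\,dt\bigr)\bigl(\int_{\delta_k}\bar{X}_j(t')\,dt'\bigr)$. Since $X=u_k$ on $\delta_k$, each block contributes $|\delta_k|^{2}u_k^{(i)}\bar{u}_k^{(j)}$, and collecting these reproduces the diagonal form $\sum_k w_k|u_k\rangle\langle u_k|=\rho_A^{\text{intrinsic}}$ after applying the energy-normalization and statistical-isotropic-power conditions from Section~\ref{ptrace}.

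\textbf{Necessity.} For the converse, I would set $M_{kl}:=\iint_{\delta_k\times\delta_l}K_B(t,t')\,dt\,dt'$ and rewrite the identity as $\Delta^{-2}\sum_{k,l}M_{kl}|u_k\rangle\langle u_l|=\sum_k w_k|u_k\rangle\langle u_k|$. Orthonormality of $\{u_k\}$ makes the rank-one operators $\{|u_k\rangle\langle u_l|\}$ linearly independent, forcing $M_{kl}=|\delta_k|\Delta\,\delta_{kl}$. Two structural features of $K_B$ then upgrade this integrated constraint to a pointwise one: (i) $K_B(t,t')=\langle Y(t')|Y(t)\rangle$ is a positive-semidefinite Gram kernel, so the Cauchy--Schwarz bound $|K_B(t,t')|^{2}\le K_B(t,t)K_B(t',t')$ together with $M_{kl}=0$ for $k\neq l$ forces $K_B\equiv 0$ on every off-diagonal block $\delta_k\times\delta_l$; (ii) $Y$ being piecewise constant on the partition makes $K_B$ constant on each diagonal block, and the required value $M_{kk}=|\delta_k|\Delta$ pins that constant to $1$.

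\textbf{Main obstacle.} The hard part will be the pointwise reconstruction in the necessity direction. The integrated identity $M_{kl}=|\delta_k|\Delta\,\delta_{kl}$ alone is too weak---one could redistribute mass inside a block while preserving the integral---so the positive-semidefiniteness invoked in (i) is indispensable for vanishing off the diagonal, and the piecewise-constant form of $Y$ in (ii) is what isolates the correct constant value on the diagonal. A parallel delicate point is tracking the macro-randomization normalizations (the moments $\mathbb{E}[\xi_A^{2}]$, $\mathbb{E}[\xi_B^{2}]$ and their joint value) so that the $w_k$ appearing in $\rho_A^{\text{intrinsic}}$ matches the $w_k$ emerging from the double-integral reduction rather than an unwanted $w_k^{2}$.
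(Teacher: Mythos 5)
Your sufficiency direction is the same computation the paper performs (block-diagonal decoupling of the double integral, constancy of $X$ on each $\delta_k$, the $|\delta_k|^2$ factors), but it does not end where you say it ends. The computation yields $\Delta^{-2}\sum_k|\delta_k|^2\,|u_k\rangle\langle u_k| = \sum_k w_k^2\,|u_k\rangle\langle u_k|$ --- exactly what the paper writes --- and your closing claim that this ``reproduces $\sum_k w_k|u_k\rangle\langle u_k|=\rho_A^{\text{intrinsic}}$ after applying the energy-normalization and statistical-isotropic-power conditions'' is unfounded: those conditions constrain $\mathbb{E}\|Y\|^2$ and the decoupling of $\|Y\|^2$ from $X_i\bar X_j$, and neither converts $w_k^2$ into $w_k$. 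Be aware that the paper's own proof does not close this gap either: it only proves the ``if'' direction, and it identifies $\sum_k w_k^2|u_k\rangle\langle u_k|$ with the partial trace of the normalized joint vector in the standard quantum formalism, not with the time-averaged autocorrelation (\ref{sub2})--(\ref{sub4}) that actually defines $\rho_A^{\text{intrinsic}}$. So on sufficiency you match the paper's route, and your necessity argument is an addition the paper omits entirely.

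Unfortunately the necessity sketch is where the real problem surfaces. Matching coefficients against $\rho_A^{\text{intrinsic}}=\sum_k w_k|u_k\rangle\langle u_k|$ indeed forces $M_{kk}=w_k\Delta^2=|\delta_k|\Delta$ (and the off-diagonal vanishing is immediate here, since $Y$ is piecewise constant, so $K_B\equiv\langle v_l|v_k\rangle$ on $\delta_k\times\delta_l$ and $M_{kl}=0$ already gives $\langle v_l|v_k\rangle=0$; the Cauchy--Schwarz machinery is only needed for a general $Y$). But a constant $c_k$ on a block of area $|\delta_k|^2$ gives $M_{kk}=c_k|\delta_k|^2$, so $M_{kk}=|\delta_k|\Delta$ pins $c_k=\Delta/|\delta_k|=1/w_k$, not $1$ as you assert. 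Carried out correctly, your own argument therefore produces the weighted kernel $\sum_k(\Delta/|\delta_k|)\,1_{\delta_k}(t)1_{\delta_k}(t')$ --- the discrete analogue of the nascent-delta kernel of section \ref{ptrace} --- rather than the unit-value Orthonormal Block Kernel; with the unit-value kernel the two sides of the claimed identity agree only when all $w_k$ are equal. In short, the $w_k$-versus-$w_k^2$ tension you flag as a ``delicate point'' is not a bookkeeping issue to be tracked through normalizations: it is the crux, it makes the biconditional as stated incompatible with the definition of $\rho_A^{\text{intrinsic}}$ used in your necessity setup, and neither your sketch nor the paper's proof resolves it.
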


\begin{proof}
The partial trace of the macro-covariance is given by the double integral:
\begin{equation}
(\rho_A)_{ij} = \frac{1}{\Delta^2} \iint_{\Delta \times \Delta} X_i(t) \bar{X}_j(t') K_B(t, t') \, dt \, dt'
\end{equation}
Substituting the Orthonormal Block Kernel:
\begin{equation}
(\rho_A)_{ij} = \frac{1}{\Delta^2} \sum_{k=1}^r \iint_{\delta_k \times \delta_k} X_i(t) \bar{X}_j(t') (1) \, dt \, dt'
\end{equation}
Since $X(t)$ is constant ($u_k$) on each subinterval $\delta_k$:
\begin{equation}
(\rho_A)_{ij} = \frac{1}{\Delta^2} \sum_{k=1}^r (u_k)_i (\bar{u}_k)_j \cdot |\delta_k|^2
\end{equation}
Recalling $|\delta_k| = w_k \Delta$, we obtain:
\begin{equation}
\rho_A = \sum_{k=1}^r w_k^2 |u_k\rangle\langle u_k|
\end{equation}
By the normalization of the joint state $Z = \sum w_k (u_k \otimes v_k)$, this result is identical to the partial trace derived from the standard quantum formalism.
\end{proof}

\subsection{Physical Implications}
This lemma demonstrates that entanglement requires a high degree of {\it Temporal Coordination}:
\begin{itemize}
    \item \textbf{Subinterval Alignment:} If $X(t)$ and $Y(t)$ transition between states at different micro-times, $K_B(t, t')$ would overlap with different $X$-vectors, generating non-vanishing off-diagonal terms (interference) that represent a loss of coherence.
    \item \textbf{Auxiliary Orthogonality:} The requirement that $\langle v_k | v_m \rangle = \delta_{km}$ ensures that the kernel $K_B$ acts as a ``selector'' of subintervals, effectively marginalizing the $B$ influence without distorting the $A$ statistics.
\end{itemize}
Finally, we remark that the methodology of quantum theory is increasingly applied to ``quantum-like'' modeling in cognitive science and decision-making \cite{Khr10}. The Fourth-Order Moment Structure addressed here provides the missing temporal scale needed to reconcile classical stochasticity with these powerful formalisms.

\section{Concluding Discussion: The Relational Nature of Systems}

The \textbf{Double Covariance Model (DCM)} provides a fundamental reinterpretation of the quantum state, treating the density operator as the fourth-order moment structure of an underlying classical Kolmogorov probability space. By grounding the quantum formalism in the interplay between micro and macro temporal scales, the DCM addresses both the technical derivation of entanglement and the conceptual origin of quantum randomness.

\subsection{Scale, Synchronization, and Entanglement}
The central innovation of the DCM lies in its dual-scale approach. It demonstrates that \textbf{entanglement} is a macro-time phenomenon reflecting \textbf{micro-time consistency}—a pathwise constraint that allows for quantum correlations even when subquantum processes are statistically independent at the macro level. Furthermore, the model provides a stochastic realization of the \textbf{partial trace}, showing it to be equivalent to the marginalization of hidden classical correlations. This shifts the view of the partial trace from a mere mathematical operation to a physical coarse-graining of micro-synchronizations.

\subsection{Relational Identity and the Universal State}
The DCM framework challenges the ontological status of isolated systems, suggesting that the distinction between ``composite'' and ``individual'' systems is relational rather than absolute:
\begin{itemize}
    \item \textbf{Scale-Dependent Identity}: A system is defined by its window of synchronization. It is viewed as composite when internal micro-synchronizations are resolved, but acts as an individual entity when these parts aggregate into a unified fluctuation.
    \item \textbf{Individual Systems as Residues}: Truly isolated systems are mathematical idealizations. In the DCM, an individual system is the marginal residue of a global field that remains after discarding external correlations with the environment.
    \item \textbf{The Universal State}: This hierarchical view implies that the only truly individual system is the Universe itself, described as a state of perfect global micro-synchronization. Local mixed states are a direct consequence of our perspective as local observers perceiving only a fraction of the total universal covariance structure.
\end{itemize}

\subsection{Broader Implications}
The ability of the DCM to generate entangled states from classical processes suggests significant applications in ``quantum-like'' modeling across interdisciplinary fields. Ultimately, the DCM provides a bridge between classical pathwise certainty and the statistical formalism of quantum mechanics, suggesting that the quantum state is not a standalone primitive but a reduced description of a system's participation in a larger, synchronized whole.

\section*{Appendix A: Vectors as Hilbert-Schmidt operators}

In the infinite dimensional case we use the isomorphism:
\begin{equation}
\label{BE2}
 H_A \otimes H_B  \cong {\cal L}_{HB}(H_B,H_A),
\end{equation}
where ${\cal L}_{HB}(H_B,H_A)$ is the space of Hilbert-Schmidt operators.
For physicists, this case is even more illustrative. Consider the case of $L_2$ spaces, 
$$
H_A=L_2(\mathbb{R}^{n})= \{ \phi: \mathbb{R}^n \to \mathbb{C}; ||\phi||^2= \int |\phi(x)|^2 dx < \infty\},
$$
$$
H_B= L_2(\mathbb{R}^{m})=\{\psi: \mathbb{R}^m \to \mathbb{C}; ||\psi||^2= \int |\psi(y)|^2 dy  < \infty\},
$$
$$
H_A\otimes H_B=L_2 (\mathbb{R}^{n+m})= \{\Psi: \mathbb{R}^{n+m} \to \mathbb{C}; ||\Psi||^2= \int |\Psi (x,y)|^2  dx< \infty\}
$$
Take $\Psi= \Psi(x, y) \in H_A\otimes H_B,$ it determines the (Hilbert-Schmidt) operator acting between $H_B$ and $H_A,$
$$
\hat \Psi\phi (v)= \int \Psi(x, y) \phi(y) dy, \; \phi \in  L_2(\mathbb{R}^{m}).
$$
The map $\Psi \to hat \Psi$ is the unitary operator.

This isomorphism was widely used by von Neumann \cite{VN} and by the author in PCSFT \cite{Beyond}.

\section*{Appendix B: Connection with theory of stochastic processes}

Our construction of subquantum stochastic processes presented in sections \ref{DS} can be connected (at least indirectly)  with some special parts of theory of classical stochastic processes. Our construction can be coupled to the theory of {\it regime-switching and piecewise-defined processes} with examples as Markov-modulated processes, switching diffusions, piecewise deterministic Markov processes. They are structured similarly:
\begin{enumerate}
\item time is partitioned into random or deterministic intervals; 
\item on each interval, the process obeys a fixed rule;
\item the switching mechanism is governed by another random process.
\end{enumerate}
But here is the key difference: in classical regime-switching models, the regimes are independent across components unless explicitly coupled. In the presented model: on each subinterval, two processes must satisfy a joint consistency constraint - classical analog of  ``entangled behavior''. The constraint is structural, not probabilistic. This already goes beyond standard theory.

Related ideas appear in coupling theory, random environment models, and stochastic synchronization, but the specific combination of macro-level statistical independence with micro-time pathwise consistency constraints appears to be absent from the standard theory of stochastic processes. Our construction differs essentially by combining local-in-time consistency with global statistical independence. 

\subsubsection*{Comparison with stochastic synchronization}

At a superficial level, the proposed construction shares certain formal similarities with
processes exhibiting stochastic synchronization. In both frameworks, coherence emerges
from systems driven by randomness, and the analysis is naturally formulated in terms of
time-dependent stochastic processes rather than static random variables. Moreover, the
use of time partitioning, regime switching, and local-in-time structure places the present
model in conceptual proximity to classical theories of regime-switching processes,
random environments, and noise-driven synchronization.

However, the similarity is limited, and the underlying mechanisms are fundamentally
different. In the standard theory of stochastic synchronization, synchronization is a
\emph{dynamical and statistical phenomenon}. Two or more stochastic processes become
aligned due to coupling, common noise, or shared environmental fluctuations. The resulting
coherence is typically expressed in probabilistic or asymptotic terms, such as convergence
of trajectories, phase locking in distribution, or contraction of distances in expectation
or almost surely as time tends to infinity. Importantly, stochastic synchronization
generally relies on some form of statistical dependence, either explicit or implicit,
between the synchronized components.

In contrast, the present model does not rely on coupling, common noise, or statistical
dependence. The stochastic processes $X(t,\omega)$ and $Y(t,\omega)$ may be fully
independent at the macro level, with vanishing covariances and factorizable joint
distributions. Coherence arises instead from \emph{micro-time consistency constraints}
imposed almost surely on selected subintervals of the macro-time window. These constraints
are structural and pathwise: on each active micro-interval, the pair $(X(t,\omega),
Y(t,\omega))$ is required to belong to a prescribed consistency set. No convergence,
attraction, or dynamical synchronization mechanism is involved.

This distinction becomes especially pronounced in the generation of entangled states.
Within stochastic synchronization theory, synchronization does not produce nonseparable
macro-level states unless explicit coupling or shared randomness is introduced. In the
present framework, however, entanglement emerges as a macro-time effect of micro-time
coordination, even when the underlying stochastic processes remain statistically
independent. The entangled density operator reflects consistency of microscopic behavior
across time, rather than correlation or dependence in the underlying probability space.

Thus, while related ideas appear in stochastic synchronization, coupling theory, and
random environment models, the proposed construction represents a qualitatively different
mechanism. It combines local-in-time pathwise consistency with global statistical
independence, leading to a classical stochastic representation of quantum entanglement
that lies outside the standard scope of stochastic synchronization theory.

\subsubsection*{Comparison with processes with admissible trajectory sets}

A \emph{random process with admissible trajectory sets} is a classical stochastic process constrained so that, almost surely, its sample paths lie within a prescribed set of trajectories. Formally, let $X(t,\omega)$ be a stochastic process on a probability space $(\Omega, \mathcal{F}, \mathbb{P})$, and let $\mathcal{A} \subset C([0,T],\mathbb{R}^n)$ denote the set of admissible trajectories. Then $X$ is said to respect $\mathcal{A}$ if
\[
\mathbb{P}\big( X(\cdot,\omega) \in \mathcal{A} \big) = 1.
\]

These models appear in constrained stochastic control, viability theory, and lattice or network systems. Unlike standard stochastic processes, the trajectory constraints can enforce pathwise properties (e.g., monotonicity, switching rules, or geometric constraints) that cannot be expressed purely via marginal distributions or covariances.

The micro-time consistency model introduced in this work can be viewed as a natural extension of this concept, with two crucial distinctions. First, the constraints are imposed jointly on a pair of processes $(X(t,\omega), Y(t,\omega))$, rather than on a single process. On each active micro-time subinterval $\delta_j$, the pair is required to satisfy a consistency condition
\[
\forall t\in\delta_{r(\omega)}:\ (X(t,\omega),Y(t,\omega)) \in \mathcal{C}_{r(\omega)}, \quad \mathbb{P}\text{-a.s.},
\]
which enforces a classical analogue of entanglement at the micro-time level. Second, despite these pathwise constraints, the processes $X$ and $Y$ can remain statistically independent at the macro level, so that macro-level correlations vanish while micro-level alignment generates the correct entangled density operator after averaging.

In contrast, standard random processes with admissible trajectory sets typically induce statistical dependence through their constraints, or apply constraints only to a single process. Therefore, while the micro-time consistency construction shares the formal motif of pathwise admissibility with these classical processes, it introduces a fundamentally new mechanism: \emph{joint micro-time constraints combined with macro-level statistical independence}, which underlies the emergence of quantum-like entanglement in this classical stochastic framework.

\end{document}